\documentclass[letterpaper]{article} % DO NOT CHANGE THIS
\usepackage{aaai2027}  % DO NOT CHANGE THIS
\usepackage[hyphens]{url}  % DO NOT CHANGE THIS
\usepackage{graphicx} % DO NOT CHANGE THIS
\usepackage{natbib}  % DO NOT CHANGE THIS AND DO NOT ADD ANY OPTIONS TO IT
\usepackage{caption} % DO NOT CHANGE THIS AND DO NOT ADD ANY OPTIONS TO IT
\usepackage{algorithm}
\usepackage{algorithmic}

\usepackage{newfloat}
\usepackage{listings}
\DeclareCaptionStyle{ruled}{labelfont=normalfont,labelsep=colon,strut=off} % DO NOT CHANGE THIS
\floatstyle{ruled}
\newfloat{listing}{tb}{lst}{}
\floatname{listing}{Listing}

\usepackage{booktabs}
\usepackage{xspace}
\usepackage{algorithm}
\usepackage{algorithmic}
\usepackage{amsthm}
\usepackage{adjustbox}
\usepackage{tabularx}
\usepackage{multirow}
\usepackage{booktabs}
\usepackage{array}
\usepackage{amsmath}
\theoremstyle{definition}
\newtheorem{definition}{Definition}
\newcolumntype{Y}{>{\centering\arraybackslash}X}
\newtheorem{theorem}{Theorem}
\usepackage{amssymb}
\newcommand{\model}{DiGR\xspace}
\newtheorem{proposition}{Proposition}

\title{From Token Generation to Item Ranking: Direct Generative Recommendation with Semantic IDs}
\author{
    Yuanbo Zhao\textsuperscript{\rm 1}\equalcontrib,
    Ruochen Liu\textsuperscript{\rm 1}\equalcontrib,
    Senzhang Wang\textsuperscript{\rm 1}\corresponding,
    Jun Yin\textsuperscript{\rm 2},
    Yuxin Dong\textsuperscript{\rm 1},
    Huan Gong\textsuperscript{\rm 3},
    Hao Chen\textsuperscript{\rm 4},
    Shirui Pan\textsuperscript{\rm 5},
    Chengqi Zhang\textsuperscript{\rm 3}
}
\affiliations{
    \textsuperscript{\rm 1}Central South University\\
    \textsuperscript{\rm 2}Hong Kong Polytechnic University\\
    \textsuperscript{\rm 3}National University of Defense Technology\\
    \textsuperscript{\rm 4}City University of Macau\\
    \textsuperscript{\rm 5}Griffith University\\
    zhao\_yb@csu.edu.cn, ruochen@csu.edu.cn, szwang@csu.edu.cn, Junmay.yin@connect.polyu.hk, 8209230306@csu.edu.cn, gonghuan@nudt.edu.cn, sundaychenhao@gmail.com, s.pan@griffith.edu.au, chengqi.zhang@polyu.edu.hk
}

\begin{document}

\maketitle

\begin{abstract}
% A core objective in recommender systems is to accurately model the distribution of user preferences over items to enable personalized recommendations. Recently, driven by the strong generative capabilities of large language models (LLMs), LLM-based generative recommendation has become increasingly popular. However, we observe that existing methods inevitably introduce systematic bias when estimating item-level preference distributions. Specifically, autoregressive generation suffers from incomplete coverage due to beam search pruning, while parallel generation distorts probabilities by assuming token independence. We attribute this issue to a fundamental modeling mismatch: these methods approximate item-level distributions via token-level generation, which inherently induces approximation errors. Through both theoretical analysis and empirical validation, we demonstrate that token-level generation cannot faithfully substitute item-level generation, leading to biased item distributions. To address this, we propose \textbf{Sim}ply \textbf{G}enerative \textbf{R}ecommendation (\textbf{SimGR}), a framework that directly models item-level preference distributions in a shared latent space and ranks items by similarity, thereby aligning the modeling objective with recommendation and mitigating distributional distortion. Extensive experiments across multiple datasets and LLM backbones show that SimGR consistently outperforms existing generative recommenders. 

Generative recommendation formulates item recommendation as a token-level generation task, where Semantic IDs (SIDs) represents each item as a sequence of discrete tokens. However, recommendation ultimately requires item-level rankings, whereas SID-based methods derive them by decoding token-level outputs. We term these outputs the token interface; together, the interface and decoder form a token-mediated pipeline. We establish a theoretical dichotomy: if the interface is ranking-insufficient, no decoder based solely on it can guarantee exact ranking recovery; if it is ranking-sufficient, exact decoding is output-equivalent to item-level scoring. Thus, for item ranking, token-level generation either loses essential ranking information or provides no additional ranking expressiveness beyond direct item-level scoring.
 Based on this insight, we propose \textbf{Di}rect \textbf{G}enerative \textbf{R}ecommendation (\model), a framework that directly models item-level preferences while preserving the semantic structure of SID. Instead of treating SID tokens as generation targets, \model uses them as item representations and learns user-item matching through a unified item-level scoring function. Extensive experiments on multiple real-world datasets with LLM backbones of different scales demonstrate that \model consistently outperforms existing generative recommenders as well as ID-based methods.
\end{abstract}

\section{Introduction}

Generative recommendation has fundamentally reshaped recommender systems by formulating item recommendation as a token-level generation task via Large Language Models (LLMs). In this paradigm, Semantic ID (SID) has become an increasingly adopted item representation~\cite{rajput2023recommender,zheng2024adapting}. 
Through vector quantization techniques such as RQ-VAE~\cite{NIPS2017_7a98af17,lee2022autoregressive}, SIDs represent each item as a sequence of discrete semantic tokens. Their shared semantic structure improves generalization to new and long-tail items and previously unseen item transitions, while yielding more stable representations under catalog shifts~\cite{singh2024better,ding2026does,zheng2025enhancing}. These advantages have made SIDs a widely adopted item representation in generative recommendation, where recommendation is formulated as a token-level generation task.
However, the ultimate goal of recommendation remains item-level preference ranking, while SID-based generative recommendation requires recovering item-level preferences from multi-step token predictions~\cite{hou2024large,zhang2025recommendation,singh2024better}. Therefore, a fundamental question is whether token-level prediction can effectively preserve and recover the ranking information required for item recommendation.

\begin{figure}[t]
    \centering
    \includegraphics[width=1.0\linewidth]{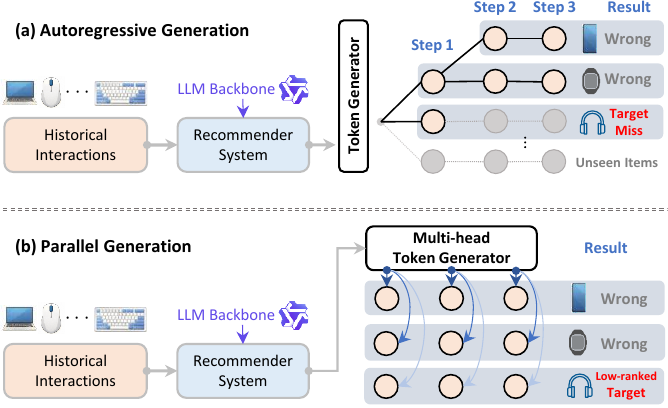}
    \caption{Illustration of ranking-information loss in SID-based generation. (a) Autoregressive decoding explores only a small subset of valid SID sequences constrained by beam search, omitting many candidate items. (b) Parallel decoding multiplies token-level probabilities across positions, which can undesirably lower the score of the target item.}
    \label{fig:limitation}
\end{figure}

As illustrated in Fig.~\ref{fig:limitation}, existing approaches generally adopt two types of generation paradigms: autoregressive generation and parallel generation~\cite{li2025survey}. \textbf{Autoregressive approaches} (Fig.~\ref{fig:limitation}a) follow the standard next-token prediction pipeline, predicting each token sequentially and finally applying beam search over a prefix-tree-constrained search space to retain high-probability paths for recovering item rankings~\cite{deng2025onerec,zheng2024adapting}. In contrast, \textbf{parallel approaches} (Fig.~\ref{fig:limitation}b) employ multiple prediction heads to simultaneously estimate token distributions for all positions in a single forward pass, and combine the predicted distributions across positions to derive item scores, thereby avoiding the overhead of sequential decoding~\cite{hou2025generating,shi2025llada}. Despite their different generation mechanisms, both approaches share a fundamental assumption: item-level preference ranking can be indirectly recovered from multiple token-level predictions.

However, such indirect recovery inevitably introduces systematic biases. Specifically, (i) autoregressive decoding requires searching over an exponentially growing token space and can only retain a limited number of high-probability paths, making it difficult to fully preserve the item-level preference distribution~\cite{lin2025order}; (ii) parallel decoding methods, while covering all token positions simultaneously, rely on approximation assumptions when aggregating token-level predictions, such as combining marginal probabilities across positions to derive item scores, which may distort the underlying item-level probability structure. We further prove that these limitations are not incidental flaws of particular decoding strategies. Instead, they arise from a fundamental constraint on recovering item rankings from token-level predictions. If multi-step token predictions preserve complete item-level ranking information, exact recovery from them is mathematically equivalent, at the output level, to directly scoring each item. Otherwise, two preference distributions can produce identical token prediction outputs while inducing different item rankings. No decoder can correctly recover both rankings from the same outputs. Thus, multi-step token prediction either provides no additional ranking expressiveness beyond direct item scoring or is insufficient for exact item-level ranking recovery.

This dichotomy reveals that the representational and predictive roles of SIDs are not inherently coupled. It is therefore possible to retain SIDs as compositional item representations while eliminating token-level SID decoding from the prediction interface. Based on this separation, we propose \textbf{Di}rect \textbf{G}enerative \textbf{Recommender} (\model), which retains SID-based item composition but replaces token generation with direct item-level scoring. Specifically, an LLM encodes each user's interaction history into a latent representation. In parallel, \model aggregates the SID token embeddings of each item into a compositional item representation. It then computes preference scores by comparing the two representations in a shared semantic space and ranks the entire item set. Using the same item-level scoring function during training and inference eliminates the distribution shift introduced by intermediate token-level recovery. Our contributions are summarized as follows:

\begin{itemize}
\item We identify an inherent theoretical bottleneck of token-level decoding: multi-step token prediction is either insufficient for accurate item ranking or equivalent to direct item-level scoring.
\item Motivated by this analysis, we propose \model, which preserves the semantic information of SID while directly parameterizing user preference at the item level, thereby avoiding the systematic bias introduced by multi-step token decoding.
\item Extensive experiments on multiple real-world datasets and LLM backbones of different scales demonstrate that \model consistently outperforms existing generative recommendation methods.
\end{itemize}

\section{Related Work}
\subsection{LLM-enhanced Discriminative Recommendation}
 
LLMs have been increasingly incorporated into discriminative recommenders,
where items remain atomic candidates ranked by user--item compatibility
scores. Rather than changing this inference interface, existing methods
mainly use LLMs to enrich recommendation representations. AlphaRec combines
language-derived item representations with collaborative filtering
components~\cite{sheng2025language}, while RLMRec constructs semantic user
and item profiles and aligns them with collaborative representations
through cross-view learning~\cite{ren2024representation}. These studies
demonstrate the value of LLM-derived knowledge while retaining conventional
item-level scoring.

\subsection{Generative Recommendation}

Generative recommendation instead formulates next-item prediction as sequence
generation, with existing methods differing primarily in item representation
and catalog grounding. Text-grounded approaches such as BIGRec generate
item-related language and subsequently map it to actual items~\cite{10.1145/3716393}.
A prominent alternative represents each item with a sequence of semantic ID
tokens. TIGER establishes the autoregressive SID generation paradigm, while
LC-Rec further integrates semantic and collaborative information into learned
identifiers~\cite{rajput2023recommender,zheng2024adapting}. Recent methods
reconsider the decoding process: LLaDA-Rec employs discrete diffusion for
bidirectional SID generation, whereas RPG predicts long unordered SIDs with
graph-guided decoding~\cite{shi2025llada,hou2025generating}. Despite their
different representations and generation orders, these methods obtain item
predictions through token generation and subsequent item identification. In
contrast, our model family preserves SID compositionality while performing
prediction directly at the item level.

\section{Analysis of Token-Mediated Item Ranking}
\label{sec:motivation}

This section analyzes whether token-mediated inference faithfully
preserves the item ranking induced by a generative recommender. We
separate this question into two requirements: whether the token
interface contains sufficient information to identify the item ranking,
and whether the decoder recovers that ranking exactly from the
interface. This distinction allows us to analyze autoregressive and
parallel generation within a unified framework.

\subsection{A Unified View: Preserving Model-Induced Item Rankings}

In SID-based generative recommenders, SIDs represent each item $i\in\mathcal{I}$ as a
sequence of discrete semantic tokens,
\[
\mathbf{s}_i=(s_{i,1},\ldots,s_{i,L}).
\]
Although they predict tokens, the recommendation task ultimately require a item-level preference ranking. They must therefore expose the item preferences
learned by the model through a token-level interface and then recover an
item-level ranking from that interface via a decoder.

Our analysis focuses on whether this token-mediated inference process faithfully preserves the item preferences learned by the model. For a fixed user context \(x\), let
\(p(\cdot\mid x)\) denote the model-induced item-level preference distribution,
where \(p(i\mid x)\) is the preference assigned to item \(i\). The corresponding
exact model ranking is obtained by sorting all items according to these
preferences:
\[
\mathcal{R}(p)
=
\operatorname{argsort}^{\downarrow}_{i\in\mathcal{I}}p(i\mid x).
\]

For an arbitrary SID generation paradigm $G$, let $\Phi_G(p)$ denote the
token-level information through which the model-induced preferences $p$ are
made available before item-level decoding. For example, $\Phi_G(p)$ may
consist of next-token conditional probabilities in autoregressive generation
or position-wise token probabilities in parallel generation. A decoder $D_G$
then converts this information into the final item ranking prediction:
\[
\widehat{\mathcal{R}}_G(p)
=
D_G\!\left(\Phi_G(p)\right).
\]
This decomposition separates an information-theoretic question from an
algorithmic one: whether \(\Phi_G(p)\) contains sufficient information to
uniquely determine \(\mathcal{R}(p)\), and, when it does, whether the
particular decoder \(D_G\) recovers \(\mathcal{R}(p)\) exactly from that
information.

We call $\Phi_G$ \emph{globally ranking-sufficient} if, for any two possible
model-induced preference distributions $p$ and $\widetilde p$,
\[
\Phi_G(p)=\Phi_G(\widetilde p)
\quad\Longrightarrow\quad
\mathcal{R}(p)=\mathcal{R}(\widetilde p).
\]
This condition requires each output of the token interface to identify a unique item
ranking. It therefore determines whether exact ranking recovery is possible
from the information exposed by the interface. The resulting alternatives are
formalized below.

\begin{proposition}
\label{prop:token_ranking_dichotomy}

For a token interface $\Phi_{\mathcal{G}}$, the following two
statements are equivalent:

\begin{enumerate}
    \item $\Phi_{\mathcal{G}}$ is globally ranking-sufficient.
    
    \item There exists an item-level score mapping
    \[
    \mathcal{S}_{\mathcal{G}}:
    \operatorname{Range}(\Phi_{\mathcal{G}})
    \rightarrow
    \mathbb{R}^{|\mathcal{I}|}
    \]
    such that, for every model-induced item distribution $p$,
    \[
    \operatorname{argsort}^{\downarrow}_{i\in\mathcal{I}}
    \mathcal{S}_{\mathcal{G}}
    \bigl(
        \Phi_{\mathcal{G}}(p)
    \bigr)_i
    =
    R(p).
    \]
\end{enumerate}

If these equivalent conditions do not hold, there exist two
distributions $p$ and $\widetilde{p}$ satisfying
\[
\Phi_{\mathcal{G}}(p)
=
\Phi_{\mathcal{G}}(\widetilde{p}),
\qquad
R(p)
\neq
R(\widetilde{p}),
\]
and no decoder or item-level score mapping based only on
$\Phi_{\mathcal{G}}$ can recover the exact ranking for both.
\end{proposition}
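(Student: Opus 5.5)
The plan is to prove the two implications separately, using only the definition of global ranking-sufficiency and the fact that $\mathcal{R}(p)$ is a well-defined function of $p$. Throughout I fix a tie-breaking convention for $\operatorname{argsort}^{\downarrow}$, for instance by a fixed total order on $\mathcal{I}$, so that every score vector determines a unique ranking. I also identify $R(p)$ in the statement with $\mathcal{R}(p)$.

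\emph{(1) $\Rightarrow$ (2).} Assume $\Phi_{\mathcal{G}}$ is globally ranking-sufficient. For each $\phi\in\operatorname{Range}(\Phi_{\mathcal{G}})$, pick any $p$ with $\Phi_{\mathcal{G}}(p)=\phi$ and set $\rho(\phi)=\mathcal{R}(p)$. Sufficiency says that any two preimages of $\phi$ induce the same ranking, so $\rho$ is well defined.

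It then remains to realize each ranking as a score vector. Given $\rho(\phi)=(i_1,\ldots,i_{|\mathcal{I}|})$, define
\[
\mathcal{S}_{\mathcal{G}}(\phi)_{i_k}=|\mathcal{I}|-k+1.
\]
These scores are strictly decreasing along the ranking, so sorting them returns $\rho(\phi)$ regardless of the tie convention. This gives $\operatorname{argsort}^{\downarrow}\mathcal{S}_{\mathcal{G}}(\Phi_{\mathcal{G}}(p))=\mathcal{R}(p)$ for every $p$.

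A cleaner alternative is to choose a representative $p_\phi$ and put $\mathcal{S}_{\mathcal{G}}(\phi)_i=p_\phi(i\mid x)$. This works only if ties in $p$ are handled by the same convention as $\mathcal{R}$, so I will use the rank-based scores instead. The only subtle step in this direction is the choice of representative, which needs the axiom of choice at worst. Since the item set is finite, the rank-based construction sidesteps any issue with the scores themselves.

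\emph{(2) $\Rightarrow$ (1).} Suppose $\Phi_{\mathcal{G}}(p)=\Phi_{\mathcal{G}}(\widetilde p)$. Then $\mathcal{S}_{\mathcal{G}}(\Phi_{\mathcal{G}}(p))=\mathcal{S}_{\mathcal{G}}(\Phi_{\mathcal{G}}(\widetilde p))$, because $\mathcal{S}_{\mathcal{G}}$ is a function. Applying the deterministic $\operatorname{argsort}^{\downarrow}$ to both sides and using (2) twice gives $\mathcal{R}(p)=\mathcal{R}(\widetilde p)$, which is global ranking-sufficiency.

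\emph{Failure case.} If (1) fails, negating its definition directly produces $p,\widetilde p$ with
\[
\Phi_{\mathcal{G}}(p)=\Phi_{\mathcal{G}}(\widetilde p), \qquad \mathcal{R}(p)\neq\mathcal{R}(\widetilde p).
\]
Any decoder $D$ acting only on $\Phi_{\mathcal{G}}$ outputs the single ranking $D(\Phi_{\mathcal{G}}(p))=D(\Phi_{\mathcal{G}}(\widetilde p))$ on both distributions. That one ranking cannot equal both $\mathcal{R}(p)$ and $\mathcal{R}(\widetilde p)$, so exact recovery fails for at least one of them.

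The same holds for any score mapping, since composing it with $\operatorname{argsort}^{\downarrow}$ is itself such a decoder. If randomized decoders are to be covered, I would apply the argument pointwise to each realization of the internal randomness. I would also note that guaranteed exact recovery for both distributions is then impossible.

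I expect no real obstacle. The main care points are making the tie-breaking convention explicit and checking the well-definedness of $\rho$ in the first direction.
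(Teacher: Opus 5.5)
Your proof is correct and follows the paper's Appendix A argument essentially step for step: sufficiency makes the fiberwise ranking well defined, a strictly decreasing rank encoding realizes it as scores (the paper uses $n-k$ where you use $|\mathcal{I}|-k+1$), the converse follows because $\mathcal{S}_{\mathcal{G}}$ is a function, and the failure case uses the identical-input argument. One small remark: the first direction needs no appeal to choice, since $\rho(\phi)$ is simply the unique ranking shared by every element of the fiber.
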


\begin{proof}
Suppose first that $\Phi_{\mathcal{G}}$ is globally
ranking-sufficient. Then the ranking is constant over every fiber of
$\Phi_{\mathcal{G}}$. Hence, for each
$y\in\operatorname{Range}(\Phi_{\mathcal{G}})$, we may define
$\overline{R}(y)=R(p)$ using any $p$ such that
$\Phi_{\mathcal{G}}(p)=y$. This definition is well-defined. Assigning
distinct decreasing scores according to $\overline{R}(y)$ yields a mapping
$\mathcal{S}_{\mathcal{G}}$ whose descending argsort is exactly
$\overline{R}(y)$, and therefore recovers $R(p)$.

Conversely, if such a score mapping exists, then
$\Phi_{\mathcal{G}}(p)=\Phi_{\mathcal{G}}(\widetilde{p})$ implies that
the two distributions receive the same score vector and hence the same
ranking. Thus, $\Phi_{\mathcal{G}}$ is globally ranking-sufficient.

Finally, if the conditions fail, global ranking sufficiency fails by
definition, so there exist $p$ and $\widetilde{p}$ with the same interface
output but different rankings. Any decoder or score mapping depending only
on that output must return the same result for both and cannot recover both
rankings exactly.
\end{proof}

The detailed formalized proof is provided in Appendix A.
% \begin{proof}
% If $\Phi_G$ is not globally ranking-sufficient, the first case follows
% directly from the negation of the definition. Since every decoder receives the
% same input for $p$ and $\widetilde p$, it must return the same ranking for
% both, which cannot equal both $\mathcal{R}(p)$ and
% $\mathcal{R}(\widetilde p)$.

% If $\Phi_G$ is globally ranking-sufficient, all preference distributions
% mapped to the same token-interface output share one exact model ranking.
% Defining $D_G^{\star}$ to return this unique ranking yields
% $D_G^{\star}(\Phi_G(p))=\mathcal{R}(p)$ for every $p$. Assigning item scores
% according to their positions in this recovered ranking produces an item-level
% score vector with the same ordering. The two cases are mutually exclusive and
% exhaustive.
% \end{proof}

% Proposition~\ref{prop:token_ranking_dichotomy} characterizes the fundamental
% limit of a token interface, but a ranking-sufficient interface does not by
% itself guarantee that a particular decoder is exact. 
% End-to-end ranking fidelity therefore requires both
% \begin{enumerate}
%     \item a globally ranking-sufficient token interface; and
%     \item a decoder guaranteed to recover the exact full-catalog ranking from
%     the information exposed by that interface.
% \end{enumerate}

The central implication of Proposition~\ref{prop:token_ranking_dichotomy} goes beyond the distinction between ranking-sufficient and ranking-insufficient interfaces. When a ranking-sufficient interface is coupled with an exact decoder, the resulting token-mediated inference is output-equivalent to direct item-level scoring: both produce exactly the same model-induced item ranking. Thus, from the perspective of ranking preservation, exact token decoding constitutes an indirect realization of item-level scoring rather than an indispensable inference mechanism. This equivalence motivates our subsequent development of an item-atomic paradigm that preserves SID compositionality while performing ranking directly over items. It is restricted to the final ranking and does not imply equivalence in parameterization, training or inference.

% Here, \emph{global search} refers to any inference procedure that is guaranteed
% to return the exact full-catalog model ranking, whether by exhaustive scoring
% or by an equivalent exact structured search. If the interface is
% ranking-insufficient, no decoder can repair the missing information. If the
% interface is sufficient but the decoder is only approximate, the available
% ranking information may still not be recovered.

\subsection{Existing Paradigms under the Unified View}

% Autoregressive and parallel generation instantiate the two failure modes
% identified above. Autoregressive modeling can preserve a complete
% model-induced item distribution, but practical beam search may fail to recover
% its global ranking. Parallel generation avoids sequential search, but its
% position-wise token interface may be insufficient to represent the model's
% item ordering in the first place.

We next apply this unified view to the two dominant SID generation paradigms:
autoregressive and parallel generation. For each paradigm, we examine how
model-induced item preferences are exposed through its token interface and
subsequently converted into the final item ranking.

\paragraph{Autoregressive generation: an inexact decoder.}
Autoregressive generation sequentially generates SIDs. Therefore, its token interface consists of the conditional probabilities over all valid prefixes.
For every item $i$,
\[
p(i\mid x)
=
\prod_{l=1}^{L}
p\!\left(
s_{i,l}\mid \mathbf{s}_{i,<l},x
\right).
\]
Consequently, the complete autoregressive interface is globally
ranking-sufficient: evaluating every valid SID recovers the exact model
ranking $\mathcal{R}(p)$. Autoregressive factorization itself therefore does
not necessarily distort the model's learned item ordering.

The remaining challenge lies in the decoder. A straightforward way to recover \(\mathcal{R}(p)\) is to evaluate the complete-sequence score of every legal SID sequence and rank all items accordingly. However, traversing a catalog-scale number of SID paths incurs prohibitive inference latency. Practical autoregressive recommenders therefore use beam search, which retains only the \(B\) highest-scoring prefixes at each step. Ranking partial prefixes, however, is not equivalent to ranking their descendant items: a pruned prefix may still contain an item belonging to the global top-\(k\). Consequently, for some model-induced preference distributions \(p\), \[ D_{\mathrm{beam}}\!\left(\Phi_{\mathrm{AR}}(p)\right) \neq \mathcal{R}(p). \] Beam search thus reduces inference cost at the expense of an exact recovery guarantee.

% We empirically examine this discrepancy using LC-Rec, a representative
% autoregressive generative recommender. We first train one fixed model and then
% perform inference with different beam sizes. The top-$k$ list obtained with a
% sufficiently large beam is used as a practical approximation to the model's
% near-global ranking. Table~\ref{tab:pre} reports the ranking overlap,
% recommendation accuracy, decoding coverage, and inference time. A narrow beam
% changes the ranking returned by the same model, while increasing the beam
% improves agreement with near-global search at a clear computational cost.
% These results directly demonstrate the trade-off between decoding efficiency
% and fidelity to the model-induced item ranking.

\paragraph{Parallel generation: an insufficient token interface.}
Parallel generation predicts all SID positions simultaneously. Under ideal
position-wise estimation, its token interface consists of the marginal
preferences
\[
m_l(t\mid x)
=
\sum_{i:s_{i,l}=t}p(i\mid x),
\qquad
\Phi_{\mathrm{PAR}}(p)
=
\left\{m_l(t\mid x)\right\}_{l,t}.
\]
A parallel decoder reconstructs the score of item \(i\) as
\[
q_{\mathrm{PAR}}(i\mid x)
\propto
\prod_{l=1}^{L}m_l(s_{i,l}\mid x).
\]
Although this score can be evaluated for every valid item, the following
theorem shows that the position-wise marginals do not generally preserve the
exact model ranking.

\begin{theorem}
\label{thm:parallel_insufficiency}
Let \(\mathcal{V}_l\) denote the token vocabulary at SID position \(l\).
If
\[
|\mathcal{I}|
>
1+\sum_{l=1}^{L}\left(|\mathcal{V}_l|-1\right),
\]
then \(\Phi_{\mathrm{PAR}}\) is not globally ranking-sufficient. Specifically,
there exist two model-induced preference distributions \(p^{+}\) and \(p^{-}\)
such that
\[
\Phi_{\mathrm{PAR}}(p^{+})
=
\Phi_{\mathrm{PAR}}(p^{-}),
\qquad
\mathcal{R}(p^{+})
\neq
\mathcal{R}(p^{-}).
\]
Consequently, no decoder based only on the position-wise token marginals can
recover the exact model ranking for every preference distribution. 

\end{theorem}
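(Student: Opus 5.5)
The argument is a dimension count: the marginal map is linear and, under the stated inequality, has a nontrivial kernel, so we can perturb a distribution without changing its marginals while still changing its ranking.

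\textbf{Step 1: linear setup.} We view $p(\cdot\mid x)$ as a vector in $\mathbb{R}^{|\mathcal{I}|}$. The map
\[
\Phi_{\mathrm{PAR}}:p\mapsto \bigl(m_l(t\mid x)\bigr)_{l,t},\qquad m_l(t\mid x)=\sum_{i:s_{i,l}=t}p(i\mid x),
\]
is linear, given by a $0/1$ incidence matrix $A$ with $\sum_l|\mathcal{V}_l|$ rows. For each $l$, the rows belonging to position $l$ sum to the all-ones vector $\mathbf{1}^\top$, since each item has exactly one token at position $l$.

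\textbf{Step 2: bound the rank and pick a kernel direction.} The $L$ blocks of rows all share the common vector $\mathbf{1}^\top$, so
\[
\operatorname{rank}A\le 1+\sum_{l}\bigl(|\mathcal{V}_l|-1\bigr).
\]
Under the hypothesis $|\mathcal{I}|>1+\sum_l(|\mathcal{V}_l|-1)$, the kernel of $A$ is nontrivial. Choose any nonzero $\delta\in\ker A$. Because $\mathbf{1}^\top\delta=0$, the vector $\delta$ has at least one strictly positive coordinate and one strictly negative coordinate. Fix items $a$ with $\delta_a>0$ and $b$ with $\delta_b<0$.

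\textbf{Step 3: construct the two distributions.} Let $p^0$ be a strictly positive distribution with $p^0(a)=p^0(b)$. For concreteness, take $p^0$ uniform with a tiny generic perturbation on the other items so that ties do not complicate the rankings. For $\varepsilon>0$ small enough, set
\[
p^{+}=p^0+\varepsilon\delta,\qquad p^{-}=p^0-\varepsilon\delta .
\]
Both are strictly positive and sum to one, so they are valid model-induced distributions. Since $\delta\in\ker A$, their marginals coincide: $\Phi_{\mathrm{PAR}}(p^+)=\Phi_{\mathrm{PAR}}(p^-)$. However,
\[
p^+(a)-p^+(b)=\varepsilon(\delta_a-\delta_b)>0,
\]
while $p^-(a)<p^-(b)$. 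So $a$ precedes $b$ in $\mathcal{R}(p^+)$ and follows it in $\mathcal{R}(p^-)$, and the rankings differ.

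\textbf{Step 4: conclude.} The final claim, that no decoder based only on the marginals can recover the exact ranking for every distribution, follows directly from Proposition~\ref{prop:token_ranking_dichotomy}.

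\textbf{Main obstacle.} The delicate point is handling ties if $\mathcal{R}$ uses some argsort tie-breaking convention. Two safeguards address this. First, $a$ and $b$ are strictly ordered in opposite directions in $p^+$ and $p^-$, so their relative order differs regardless of how ties among other items are broken. Second, choosing $p^0$ generic with $\varepsilon$ small also avoids spurious ties among the remaining items.
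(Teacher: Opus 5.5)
Your proof is correct and follows essentially the same route as the paper. You write $\Phi_{\mathrm{PAR}}$ as a stacked $0/1$ incidence matrix and bound its rank by $1+\sum_l(|\mathcal{V}_l|-1)$ using the fact that each position's block sums to $\mathbf{1}^\top$; you count this via the row space, while the paper shows the image lies in the subspace $\mathcal{W}$ of equal block sums. You then take a zero-sum kernel vector and perturb symmetrically around a positive base distribution to reverse one positive and one negative coordinate, exactly as the paper does. The differences are inessential: the paper uses the exact uniform distribution with $0<\epsilon<1/(n\lVert h\rVert_\infty)$, so your generic perturbation of $p^0$ is unnecessary, as your own first safeguard already shows. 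The paper also derives the decoder consequence directly from the identical inputs rather than citing Proposition~\ref{prop:token_ranking_dichotomy}.
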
 

The detailed proof of Theorem~\ref{thm:parallel_insufficiency} is in Appendix B.

Theorem~\ref{thm:parallel_insufficiency} shows that different model-induced
item rankings can produce identical position-wise token marginals. Any decoder
based only on these marginals must therefore return the same ranking for both
and fail on at least one of them. Exhaustively scoring the full candidate set
removes search approximation, but cannot recover the ranking information
already lost by \(\Phi_{\mathrm{PAR}}\).

\subsection{Design Motivation}

The preceding analysis identifies two sources of ranking mismatch in
token-mediated inference. A ranking-insufficient interface cannot
uniquely determine the model-induced item ranking, whereas a
ranking-sufficient interface still requires an exact decoder to recover
that ranking. When both requirements are satisfied, the resulting
decision admits an output-equivalent item-level score representation.
This motivates us to parameterize complete-item scores directly while
retaining SIDs as compositional item representations. The next section
formalizes the resulting model family.

\section{Methodology}
\subsection{Item-Atomic SID-Based Model Family}

We first formalize the class of SID-based recommenders motivated by the
preceding analysis. The definition is independent of any particular
context encoder, SID composition operator, or item-scoring function.

\begin{definition}
\label{def:simgr_family}

Let \(\mathcal{I}\) be a finite item set, where each item
\(i\in\mathcal{I}\) is uniquely identified by an SID
\(\mathbf{s}_i=(s_{i,1},\ldots,s_{i,L})\). An item-atomic SID-based
recommender is defined as
\begin{align}
\mathbf{v}_i
&=
\Psi_{\phi}
\left(
\mathbf{e}(s_{i,1}),\ldots,\mathbf{e}(s_{i,L})
\right),
\qquad i\in\mathcal{I},
\label{eq:aggregation}
\\
z_{\Theta}(x,i)
&=
g_{\eta}\left(\mathbf{f_\theta}(x),\mathbf{v}_i\right),
\qquad
\Theta=(\theta,\phi,\eta),
\label{eq:item_score}
\end{align}
where \(f_{\theta}\) encodes the user context, \(\mathbf{e}\) is the SID-token
embedding lookup, \(\Psi_{\phi}\) composes the complete SID into an item
representation, and \(g_{\eta}\) assigns a preference score to each complete
item.

The resulting item-level preference distribution and ranking are formalized as:
\begin{align}
p_{\Theta}(i\mid x)
&=
\frac{
\exp\left(z_{\Theta}(x,i)\right)
}{
\sum_{j\in\mathcal{I}}
\exp\left(z_{\Theta}(x,j)\right)
},
\label{eq:item_distribution}
\\
\mathcal{R}_{\Theta}(x)
&=
\operatorname{argsort}^{\downarrow}_{i\in\mathcal{I}}
z_{\Theta}(x,i).
\label{eq:item_ranking}
\end{align}
\end{definition}

% Definition~\ref{def:simgr_family} separates the semantic role of SIDs from the
% ranking decision. SIDs are used to construct complete-item representations,
% whereas both preference modeling and ranking operate directly on the resulting
% item scores. Different choices of \(f_{\theta}\), \(\Psi_{\phi}\), and
% \(g_{\eta}\) instantiate different members of this family.

Definition~1 separates the representation role of SIDs from the final
ranking decision. The SID embeddings are composed into complete-item
representations by $\Psi_\phi$, while preference modeling and ranking
operate on the resulting item scores. Different choices of
$f_\theta$, $\Psi_\phi$, and $g_\eta$ define different members of this
family. We next present \model as a parsimonious instantiation.

\subsection{\model: A Parsimonious Instantiation}

The model family defined above leaves the context encoder
$f_{\theta}$, SID composition operator $\Psi_{\phi}$, and item-level
scorer $g_{\eta}$ unspecified. \model instantiates them with an LLM
encoder, parameter-free mean pooling, and temperature-scaled
dot-product scoring, respectively.

\paragraph{Context encoding.}
Given an interaction context $x$, the LLM backbone produces a
$d$-dimensional representation
$\mathbf{h}_x=f_{\theta}(x)$, which is derived from the hidden of the last layer in \(f_\theta\) before candidate-item scoring.

\paragraph{SID composition.}
For an item composed of SID
$\mathbf{s}_i=(s_{i,1},\ldots,s_{i,L})$, \model constructs its
representation as
\begin{equation}
    \mathbf{v}_i
    =
    \frac{1}{L}
    \sum_{\ell=1}^{L}
    \mathbf{e}_{\ell}(s_{i,\ell}),
    \label{eq:digr_item_representation}
\end{equation}
where $\mathbf{e}_{\ell}(s_{i,\ell})$ is the position-specific
SID-token. This operator is instantiated as look-up in the vocabulary of the LLM backbone. The embeddings are shared across items, and mean pooling introduces no additional
composition parameters and no item-specific information.

\paragraph{Item-level scoring, optimization, and ranking.}
Let
$\mathbf{V}=[\mathbf{v}_1,\ldots,\mathbf{v}_{|\mathcal{I}|}]^{\top}$
denote the matrix of catalog representations. Given a context
representation $\mathbf{h}_x$, \model computes the catalog-level score
vector as
\begin{equation}
    \mathbf{z}_{\theta}(x)
    =
    \frac{\mathbf{V}\mathbf{h}_x}{\tau},
    \qquad
    p_{\theta}(i\mid x)
    =
    \frac{\exp(z_{\theta}(x,i))}
    {\sum_{j\in\mathcal{I}}\exp(z_{\theta}(x,j))},
    \label{eq:digr_scoring}
\end{equation}
where $\tau>0$ controls the concentration of the item distribution.
Given the ground-truth next item $i^{+}$, the model is optimized with
the item-level negative log-likelihood
\begin{equation}
    \mathcal{L}_{\mathrm{item}}
    =
    -\log p_{\theta}(i^{+}\mid x).
    \label{eq:digr_loss}
\end{equation}
Once the model parameters are fixed, $\mathbf{V}$ can be precomputed,
and top-$K$ recommendation is obtained directly from the largest
entries of the same score vector $\mathbf{z}_{\theta}(x)$. Consequently,
both learning and retrieval compare complete items under an identical
scoring function; no intermediate token-level objective or decoding
rule changes the ranking criterion between training and inference.

\subsection{Discussion}

\paragraph{Inductive capability.}
The proposed family removes SID decoding while retaining SID
compositionality. Because each item representation is constructed
solely from shared SID-token embeddings, a new item can be scored once
its SID is available and added to the candidate catalog, without
learning item-specific parameters. The family therefore supports
inductive inference at the representation level, although its
accuracy also depends on SID quality and token support.

\paragraph{Generative nature and alternative interfaces.}
The proposed family is generative because it directly parameterizes the
conditional next-item distribution ($p_{\Theta}(i\mid x)$ and learns
this distribution by maximizing next-item likelihood. Its
generative nature therefore lies in probabilistic modeling of the
recommendation outcome, rather than in token-level decoding. Existing
methods such as BIGRec~\cite{10.1145/3716393},
IDGenRec~\cite{tan2024idgenrec}, and SETRec~\cite{lin2025order} generate
intermediate token, textual, or set representations before recovering
catalog items. In contrast, our family treats each complete item as an
atomic generation target and uses its SID only to construct the item
representation. It thus changes the generation space from token
sequences to complete items, rather than abandoning the generative
formulation.

\paragraph{Flexibility and scalability.}
\model is a parsimonious instantiation of the general item-level family. Its mean SID composition operator and dot-product scorer can be replaced by more expressive alternatives, provided that the resulting model scores complete items and defines an item-level conditional distribution used consistently during training and inference. To preserve inductive inference, item representations must be constructed from shared SID-level parameters without introducing item-specific parameters. Exact ranking requires scoring the full catalog. For very large catalogs, approximate item-level retrieval can reduce inference cost. Any resulting error then arises from search approximation over items, rather than information loss through a token-generation interface.

\section{Experiment}
\label{sec:exp}

We evaluate \model from multiple perspectives by answering the following research questions:

\noindent\textbf{RQ1: Overall effectiveness.}
How does \model compare with representative ID-based sequential recommenders
and generative recommenders in next-item ranking?

\noindent\textbf{RQ2: Ranking agreement and inference trade-offs.}
How do direct and token-mediated inference compare in ranking agreement, accuracy, and efficiency?

\noindent\textbf{RQ3: Backbone configurations.}
Does the performance advantage of \model persist across the LLM
backbones?

\noindent\textbf{RQ4: SID composition.}
How do different SID composition operators affect recommendation performance?

\noindent\textbf{RQ5: Exposure diversity.}
How broadly does \model distribute its recommendations across item categories?

\subsection{Experimental Setup}

\subsubsection{Datasets}

We evaluate \model on three product categories from Amazon Reviews
2018~\cite{ni2019justifying}: Musical Instruments, Arts, Crafts and Sewing, and
Video Games. Following prior work~\cite{hou2022towards}, we apply 5-core
filtering, sort each user's interactions chronologically, and truncate each
interaction sequence to a maximum length of 20. Table~\ref{tab:dataset}
summarizes the resulting datasets.

\subsubsection{Baselines}

We compare \model with two groups of methods. The first group contains
\textbf{ID-based recommenders}: Caser~\cite{tang2018personalized},
HGN~\cite{ma2019hierarchical}, GRU4Rec~\cite{hidasi2015session},
SASRec~\cite{kang2018self}, and
S\textsuperscript{3}-Rec~\cite{zhou2020s3}. The second group contains
\textbf{generative recommenders} with different decoding
strategies: P5-CID~\cite{hua2023index,geng2022recommendation},
TIGER~\cite{rajput2023recommender},
LC-Rec~\cite{zheng2024adapting}, RPG~\cite{hou2025generating}, and
CoFiRec~\cite{wei2025cofirec}. Further detail is provided in Appendix C.

\subsubsection{Evaluation Protocol}

We adopt leave-one-out evaluation~\cite{lee2025sequential,kang2018self}. For
each user, the last interaction is used for testing, the penultimate
interaction is used for validation, and all preceding interactions are used
for training. We report HR@\(K\) for \(K\in\{1,5,10\}\) and NDCG@\(K\) for
\(K\in\{5,10\}\). All scoring-based methods, including \model, are evaluated
by ranking the full candidate catalog. Autoregressive baselines use
prefix-constrained beam search with a beam size of 20.

\subsubsection{Implementation Details}

\begin{table}[t]
    \centering
    \caption{Statistics of the three Amazon product categories. Avg. length
    denotes the average number of interactions per user after preprocessing.}
    \label{tab:dataset}
    \begin{adjustbox}{max width=\linewidth}
        \begin{tabular}{@{}lccccc@{}}
            \toprule
            Dataset & \#Users & \#Items & \#Inter. & Sparsity & Avg. length \\
            \midrule
            Instruments & 24,773 & 9,922  & 206,153 & 99.92\% & 8.32 \\
            Arts        & 45,142 & 20,956 & 390,832 & 99.96\% & 8.66 \\
            Games       & 50,547 & 16,859 & 452,989 & 99.95\% & 8.96 \\
            \bottomrule
        \end{tabular}
    \end{adjustbox}
\end{table}

\begin{table*}[t]
    \centering
    \caption{Overall recommendation performance on three Amazon product
    categories. \model achieves the best result in all 15
    dataset--metric settings. Bold and underlined values denote the best and
    second-best results, respectively.}
    \label{tab:mainres}
    \small
    \setlength{\tabcolsep}{3pt}
    \renewcommand{\arraystretch}{1.08}
    \begin{tabularx}{\textwidth}{@{}c|l|*{5}{Y}|*{5}{Y}|Y@{}}
        \toprule
        Dataset & Metric
        & Caser
        & HGN
        & GRU4Rec
        & SASRec
        & S\textsuperscript{3}-Rec
        & P5-CID
        & TIGER
        & LC-Rec
        & CoFiRec
        & RPG
        & \model \\
        \midrule

        \multirow{5}{*}{Instruments}
        & HR@1
        & 0.0182 & 0.0537 & 0.0426 & 0.0209 & 0.0262
        & 0.0587 & \underline{0.0608} & 0.0601 & 0.0592 & 0.0562
        & \textbf{0.0641} \\

        & HR@5
        & 0.0583 & 0.0583 & 0.0754 & 0.0808 & 0.0808
        & 0.0827 & 0.0863 & 0.0844 & \underline{0.0876} & 0.0756
        & \textbf{0.0953} \\

        & HR@10
        & 0.0750 & 0.0750 & 0.0966 & 0.1065 & 0.1079
        & 0.1016 & 0.1064 & \underline{0.1081} & 0.1029 & 0.0902
        & \textbf{0.1156} \\

        & NDCG@5
        & 0.0392 & 0.0392 & 0.0596 & 0.0526 & 0.0565
        & 0.0546 & \underline{0.0738} & 0.0726 & 0.0735 & 0.0619
        & \textbf{0.0802} \\

        & NDCG@10
        & 0.0446 & 0.0446 & 0.0665 & 0.0609 & 0.0643
        & 0.0625 & \underline{0.0803} & 0.0784 & 0.0801 & 0.0666
        & \textbf{0.0868} \\
        \midrule

        \multirow{5}{*}{Arts}
        & HR@1
        & 0.0114 & 0.0355 & 0.0289 & 0.0146 & 0.0233
        & \underline{0.0475} & 0.0465 & 0.0338 & 0.0457 & 0.0459
        & \textbf{0.0488} \\

        & HR@5
        & 0.0379 & 0.0654 & 0.0600 & 0.0752 & 0.0685
        & 0.0724 & \underline{0.0788} & 0.0556 & 0.0759 & 0.0684
        & \textbf{0.0809} \\

        & HR@10
        & 0.0526 & 0.0898 & 0.0805 & 0.1007 & 0.0919
        & 0.0902 & \underline{0.1012} & 0.0729 & 0.0985 & 0.0880
        & \textbf{0.1020} \\

        & NDCG@5
        & 0.0243 & 0.0505 & 0.0450 & 0.0473 & 0.0472
        & 0.0607 & \underline{0.0631} & 0.0450 & 0.0610 & 0.0544
        & \textbf{0.0656} \\

        & NDCG@10
        & 0.0294 & 0.0584 & 0.0516 & 0.0555 & 0.0547
        & 0.0664 & \underline{0.0703} & 0.0505 & 0.0683 & 0.0608
        & \textbf{0.0724} \\
        \midrule

        \multirow{5}{*}{Games}
        & HR@1
        & 0.0086 & 0.0166 & 0.0161 & 0.0145 & 0.0131
        & 0.0177 & 0.0181 & 0.0093 & \underline{0.0192} & 0.0183
        & \textbf{0.0213} \\

        & HR@5
        & 0.0371 & 0.0520 & 0.0548 & 0.0581 & 0.0591
        & 0.0506 & \underline{0.0599} & 0.0300 & 0.0579 & 0.0466
        & \textbf{0.0612} \\

        & HR@10
        & 0.0615 & 0.0837 & 0.0888 & 0.0940 & \underline{0.0963}
        & 0.0803 & 0.0939 & 0.0484 & 0.0903 & 0.0673
        & \textbf{0.0971} \\

        & NDCG@5
        & 0.0228 & 0.0340 & 0.0354 & 0.0365 & 0.0361
        & 0.0342 & \underline{0.0392} & 0.0196 & 0.0381 & 0.0326
        & \textbf{0.0403} \\

        & NDCG@10
        & 0.0306 & 0.0442 & 0.0463 & 0.0481 & 0.0481
        & 0.0437 & \underline{0.0501} & 0.0256 & 0.0485 & 0.0393
        & \textbf{0.0526} \\
        \bottomrule
    \end{tabularx}
\end{table*}

We use Qwen3-4B~\cite{qwen3technicalreport} as the default LLM backbone of
\model. We optimize the model with AdamW~\cite{loshchilov2017decoupled}, a
weight decay of 0.01, and a cosine learning-rate schedule with a warmup ratio
of 0.01. The peak learning rate and the number of training epochs are selected
on the validation set through grid search. Data parallelism and gradient
accumulation yield an effective batch size of 32.

ID-based baselines are implemented with RecBole~\cite{zhao2021recbole}.
Generative baselines follow their official architectures and training
procedures whenever available. We construct the SIDs used by \model with
RQ-VAE, while retaining the baseline-specific SID construction procedures
from their original implementations. LC-Rec uses the same Qwen3-4B backbone
and LoRA~\cite{hu2022lora} settings as \model.

% ----------------------------------------------------------------------
% REQUIRED CONTROLLED STUDY FOR THEORY--EXPERIMENT ALIGNMENT
%
% Insert this subsection after the experimental setup once the results are
% available. Do not infer or fabricate the missing values.
%
% \subsection{Ranking Fidelity under Token-Mediated Inference}
%
% Use the same context encoder, SID tokenizer, training data, and model
% capacity for all variants. Change only the inference interface:
% (1) exhaustive scoring of all legal autoregressive SID sequences;
% (2) beam search with B in {5, 10, 20, 50, 100};
% (3) product of position-wise marginals; and
% (4) direct item scoring.
%
% Report top-k overlap with exhaustive scoring, rank correlation or inversion
% rate, candidate coverage, NDCG, latency, and peak memory. This controlled
% study should provide the direct empirical evidence for the ranking-fidelity
% claims developed in the motivation section.
% ----------------------------------------------------------------------

\subsection{Overall Performance}

Table~\ref{tab:mainres} reports the overall results for RQ1. To complement
this comparison, Appendix~D provides controlled experiments on different SID
constructors, item representations, and decoding strategies under the same
LLM backbone.

\model achieves the best performance in all 15 dataset--metric combinations,
outperforming the strongest baseline in each setting by 4.85\% on average.
Its advantage is consistent across all the datasets and evaluation dimensions:
the average improvement reaches 7.58\% on Instruments, while all six
NDCG comparisons show gains of 2.81\%--8.67\%. These results indicate
that \model improves not only recommendation coverage but also the quality
of the item rankings.

The consistent superiority of \model demonstrates its end-to-end
effectiveness, even against highly competitive generative recommenders.
Specifically, generative methods constitute the strongest baseline in 14 of
the 15 settings, with S\textsuperscript{3}-Rec providing the only exception
on HR@10 for Games. The ability of \model to consistently outperform these
baselines indicates that recommendation performance depends not only on the
generative paradigm itself, but also on how item preferences are modeled and
rankings are produced. 

\subsection{Ranking Agreement and Inference Trade-offs}

\begin{table}
    \centering
    \caption{Average top-10 recommendation overlap under varying beam sizes. We perform the experiments on the "Instruments" dataset. "Inf.time" denotes the average inference time for one user on a single NVIDIA A100 GPU.}
    \small
    \begin{tabular}{l|ccc}
    \toprule
        Generation Method & Overlap & NDCG@10 & Inf.time (s) \\
         \midrule
         Beam size 50 & 90.28\% & 0.0802 & 0.683 \\
         Beam size 100 & 94.76\% & 0.0816 & 0.981 \\
         Parallel & 86.32\% & 0.0791 & 0.082 \\
         \model & 91.20\% & 0.0868 & 0.074 \\
         Exhaustive scoring & 100.00\% & 0.0842 & 5.052 \\
         \bottomrule
    \end{tabular}
    
    \label{tab:rank}
\end{table}

Because complete user-specific preference rankings are unavailable in
real-world recommendation datasets, we use the exhaustive full-catalog
ranking of LC-Rec as a proxy. Specifically, we score every legal SID sequence to
recover the exact ranking induced by the trained LC-Rec model, thereby
eliminating the approximation introduced by decoding. We then compare this
reference with the rankings produced by beam search, parallel generation,
and \model in terms of top-$K$ overlap, recommendation accuracy, and
inference efficiency. The results for RQ2 are reported in
Table~\ref{tab:rank}.

The results reveal distinct limitations of the two existing generation
paradigms. For autoregressive generation, increasing the beam size improves
agreement with the exhaustive LC-Rec ranking, raising the top-10 overlap
from 90.28\% to 94.76\%, but also increases the per-user inference time by
more than 40\%. This demonstrates that more faithful recovery of the
model-induced ranking requires increasingly expensive search. Parallel
generation largely removes this computational burden, yet produces the
lowest overlap and NDCG@10 among the evaluated methods. Its efficiency
therefore comes with more substantial ranking distortion, consistent with
the information loss caused by position-wise SID prediction.

In contrast, \model achieves a top-10 overlap of 91.20\%, surpassing both
beam search with a beam size of 50 and parallel generation. This agreement
should not be interpreted as the primary measure of recommendation quality,
because the reference reflects the ranking learned by LC-Rec rather than the
latent ground-truth preference order. More importantly, \model obtains the
highest NDCG@10 of 0.0868, exceeding exhaustive LC-Rec scoring by 3.09\%,
while requiring only 0.074\,s per user. It is therefore over nine times
faster than the evaluated beam-search configurations and slightly faster
than parallel generation. These results show that direct item-level scoring
can learn a more accurate recommendation ranking and recover its own
model-induced order efficiently, without relying on token-level decoding or
its associated accuracy--efficiency trade-off.

\subsection{Backbone Analysis}

\begin{figure}[t]
    \centering
    \includegraphics[width=\linewidth]{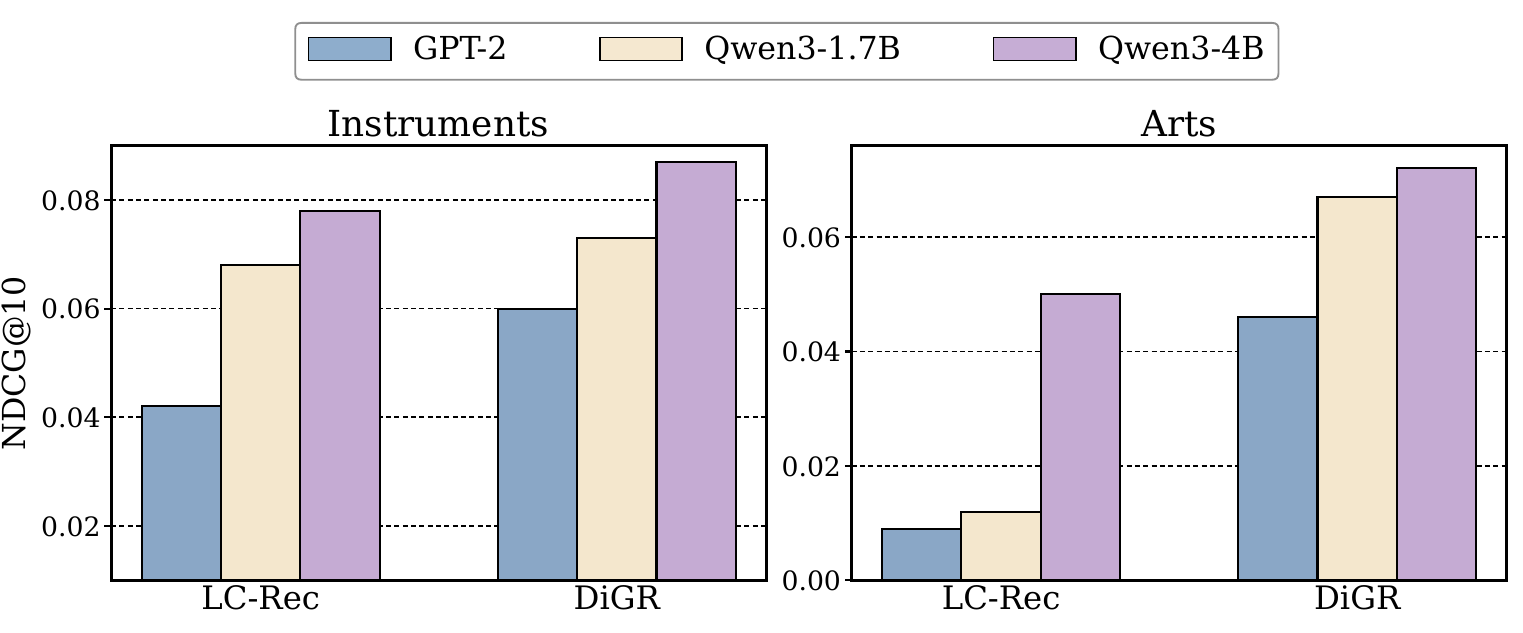}
    \caption{\model achieves higher NDCG@10 than LC-Rec under every evaluated
    backbone configuration on Instruments and Arts.}
    \label{fig:scale}
\end{figure}

To answer RQ3, we evaluate \model and LC-Rec with GPT-2
~\cite{radford2019language}, Qwen3-1.7B, and Qwen3-4B on Instruments and
Arts. For each backbone, the two methods adopt the same model configuration
and fine-tuning strategy: GPT-2 is fully fine-tuned, whereas LoRA is applied
to the Qwen3 models. Figure~\ref{fig:scale} reports the corresponding
NDCG@10 results.

Performance generally improves across the backbone scales,
demonstrating that both methods benefit from increased model capacity.
Nevertheless, \model consistently outperforms LC-Rec on both datasets and
under every backbone configuration. Its advantage is pronounced with the
smaller backbones, suggesting that its item-level formulation makes more
effective use of limited model capacity. As the backbone scales to Qwen3-4B,
\model continues to improve while preserving its performance advantage.
These results show that \model performs strongly without requiring a large
backbone, while remaining capable of exploiting the additional capacity
provided by larger LLMs.

\subsection{Effect of SID Composition}

\begin{table}[t]
    \centering
    \caption{Effect of different aggregators on Instruments. Parameter-free mean
    pooling achieves the best result on all five metrics. Bold and underlined
    values denote the best and second-best results, respectively.}
    \label{tab:agg}
    \small
    \begin{adjustbox}{width=0.47\textwidth}
        \begin{tabular}{l|ccccc}
            \toprule
            Aggregator & Concat. & Proj. & Attention & Gating & Mean \\
            \midrule
            HR@1    & 0.0601 & 0.0587 & 0.0606 & \underline{0.0621} & \textbf{0.0641} \\
            HR@5    & 0.0888 & 0.0867 & 0.0889 & \underline{0.0896} & \textbf{0.0953} \\
            HR@10   & \underline{0.1114} & 0.1051 & 0.1102 & 0.1086 & \textbf{0.1156} \\
            NDCG@5  & 0.0748 & 0.0730 & 0.0754 & \underline{0.0763} & \textbf{0.0802} \\
            NDCG@10 & 0.0821 & 0.0789 & 0.0822 & \underline{0.0825} & \textbf{0.0868} \\
            \bottomrule
        \end{tabular}
    \end{adjustbox}
\end{table}

In \model, the SID token embeddings are aggregated into a
single item representation. To answer RQ4, we compare the default mean
pooling operator with concatenation-based aggregation, projection followed
by pooling, gating, and cross-attention, while holding all other components
fixed. Table~\ref{tab:agg} reports the results on Instruments.

The results show that a simple composition operator is sufficient for
constructing effective item representations in \model. Mean pooling achieves
the best performance across all five metrics, while gating ranks second on
four, and none of the more expressive operators provides a further
improvement. This suggests that the shared SID-token embeddings already
provide an informative basis for item composition, reducing the need for an
additional learnable aggregation mechanism. By directly averaging these
embeddings, mean pooling preserves their shared semantic structure while
avoiding the additional parameters and optimization burden introduced by
more complex operators.

\subsection{Exposure Diversity}

\begin{figure}[t]
    \centering
    \includegraphics[width=0.8\linewidth]{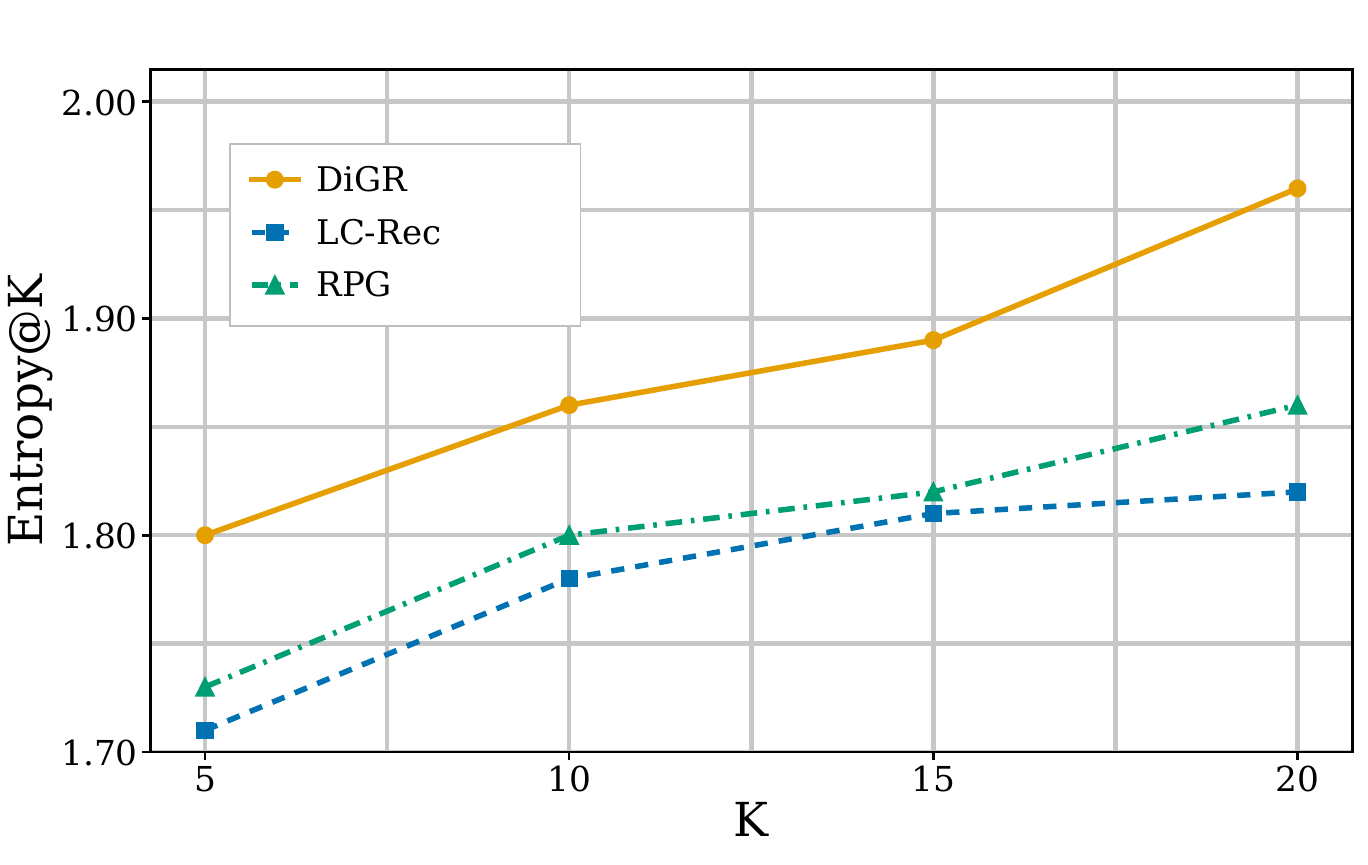}
    \caption{\model achieves the highest category-level Entropy@\(K\) at every
    evaluated cutoff on Instruments.}
    \label{fig:entropy}
\end{figure}

Beyond ranking accuracy, RQ5 examines the diversity of the
recommendation lists. Following prior work~\cite{rajput2023recommender}, we
adopt category-level Entropy@$K$, which measures the entropy of the category
distribution among the top-$K$ recommended items. A higher value represents
a broader and more balanced distribution across categories.
Figure~\ref{fig:entropy} compares \model with LC-Rec and RPG on Instruments.

\model achieves the highest Entropy@$K$ at every evaluated cutoff,
demonstrating a consistent advantage in recommendation diversity. At
$K=20$, its entropy reaches 1.962, compared with 1.859 for RPG and 1.816 for
LC-Rec. This indicates that \model is less likely to concentrate
on a small number of categories and instead produces
more varied recommendation lists. Together with its accuracy improvements in
RQ1, these results show that \model achieves greater diversity without
sacrificing ranking quality.

\section{Conclusions}
From the perspective of item-ranking preservation, position-wise parallel
marginals are generally ranking-insufficient and preclude exact recovery,
whereas autoregressive generation is ranking-sufficient: exact decoding is
output-equivalent to item-level scoring, but practical beam search may
introduce pruning errors. This motivates an item-atomic SID-based model
family and DiGR, which composes shared SID-token embeddings into item
representations and directly optimizes inference scores. Experiments on three
datasets and multiple LLM backbones show consistent gains in accuracy,
ranking quality, and exposure diversity.
% \appendix
% \input{content/appendix}

\bibliography{aaai2027}

\appendix
\section{A. Detailed Proof to Proposition 1}

\begin{proof}
Let $\mathcal{P}_{\mathcal{G}}$ denote the set of model-induced item
distributions, and let $\mathfrak{S}_{\mathcal{I}}$ denote the set of total
rankings over $\mathcal{I}$ under a fixed tie-breaking rule. Formally, the
two statements in the proposition are:

\begin{equation}
\tag{1}
\forall p,\widetilde{p}\in\mathcal{P}_{\mathcal{G}},
\quad
\Phi_{\mathcal{G}}(p)
=
\Phi_{\mathcal{G}}(\widetilde{p})
\ \Longrightarrow\
R(p)=R(\widetilde{p}),
\end{equation}
which is the definition of global ranking sufficiency; and

\begin{equation}
\tag{2}
\begin{aligned}
\exists\,
\mathcal{S}_{\mathcal{G}}:
\operatorname{Range}(\Phi_{\mathcal{G}})
&\rightarrow
\mathbb{R}^{|\mathcal{I}|}
\quad\text{such that}\\
\forall p\in\mathcal{P}_{\mathcal{G}},
\quad
\operatorname{argsort}^{\downarrow}_{i\in\mathcal{I}}
\mathcal{S}_{\mathcal{G}}
\bigl(\Phi_{\mathcal{G}}(p)\bigr)_i
&=
R(p).
\end{aligned}
\end{equation}

We first prove $(1)\Rightarrow(2)$. Suppose that $(1)$ holds. Define
\[
\overline{R}:
\operatorname{Range}(\Phi_{\mathcal{G}})
\rightarrow
\mathfrak{S}_{\mathcal{I}}
\]
by
\[
\overline{R}(y)=R(p)
\quad\text{for any }p\in\mathcal{P}_{\mathcal{G}}
\text{ satisfying }\Phi_{\mathcal{G}}(p)=y.
\]
Such a distribution exists because
$y\in\operatorname{Range}(\Phi_{\mathcal{G}})$. Moreover, this definition
is independent of the choice of $p$. Indeed, if
\[
\Phi_{\mathcal{G}}(p)
=
\Phi_{\mathcal{G}}(\widetilde{p})
=
y,
\]
then $(1)$ implies that $R(p)=R(\widetilde{p})$. Hence,
$\overline{R}$ is well-defined.

Let $n=|\mathcal{I}|$. For a ranking
$r=(i_1,\ldots,i_n)\in\mathfrak{S}_{\mathcal{I}}$, define its score
encoding $E(r)\in\mathbb{R}^{n}$ by
\[
E(r)_{i_k}=n-k,
\qquad k=1,\ldots,n.
\]
The coordinates of $E(r)$ are distinct and decrease according to $r$.
Therefore,
\[
\operatorname{argsort}^{\downarrow}_{i\in\mathcal{I}}E(r)_i=r.
\]
Now define
\[
\mathcal{S}_{\mathcal{G}}
=
E\circ\overline{R}.
\]
Then, for every $p\in\mathcal{P}_{\mathcal{G}}$,
\begin{align}
\operatorname{argsort}^{\downarrow}_{i\in\mathcal{I}}
\mathcal{S}_{\mathcal{G}}
\bigl(\Phi_{\mathcal{G}}(p)\bigr)_i
&=
\operatorname{argsort}^{\downarrow}_{i\in\mathcal{I}}
E\bigl(\overline{R}(\Phi_{\mathcal{G}}(p))\bigr)_i\\
&=
\overline{R}(\Phi_{\mathcal{G}}(p))\\
&=
R(p).
\end{align}
Thus, $(2)$ holds.

We next prove $(2)\Rightarrow(1)$. Suppose that $(2)$ holds, and take any
$p,\widetilde{p}\in\mathcal{P}_{\mathcal{G}}$ satisfying
\[
\Phi_{\mathcal{G}}(p)
=
\Phi_{\mathcal{G}}(\widetilde{p}).
\]
Because $\mathcal{S}_{\mathcal{G}}$ depends only on the interface output,
\[
\mathcal{S}_{\mathcal{G}}
\bigl(\Phi_{\mathcal{G}}(p)\bigr)
=
\mathcal{S}_{\mathcal{G}}
\bigl(\Phi_{\mathcal{G}}(\widetilde{p})\bigr).
\]
Taking the descending argsort of both sides and applying $(2)$ gives
\[
R(p)=R(\widetilde{p}).
\]
Therefore, $(1)$ holds, and $\Phi_{\mathcal{G}}$ is globally
ranking-sufficient.

Finally, if the equivalent statements fail, the negation of $(1)$ gives
$p,\widetilde{p}\in\mathcal{P}_{\mathcal{G}}$ such that
\[
\Phi_{\mathcal{G}}(p)
=
\Phi_{\mathcal{G}}(\widetilde{p}),
\qquad
R(p)\neq R(\widetilde{p}).
\]
Any decoder depending only on $\Phi_{\mathcal{G}}$ must receive the same
input and return the same output for these two distributions. It therefore
cannot recover both rankings exactly. The same argument applies to any
item-level score mapping based only on $\Phi_{\mathcal{G}}$.
\end{proof}

\section{B. Detailed Proof to Theorem 1}

\begin{proof}
Let $n=|\mathcal{I}|$ and $m_l=|\mathcal{V}_l|$. For each item
$i\in\mathcal{I}$, let
\[
\mathbf{s}_i=(s_{i,1},\ldots,s_{i,L})
\]
denote its SID. For each position $l$, define the incidence matrix
\[
A_l\in\{0,1\}^{m_l\times n}
\]
by
\[
(A_l)_{v,i}
=
\mathbb{I}[s_{i,l}=v],
\qquad
v\in\mathcal{V}_l,\ i\in\mathcal{I}.
\]
Thus, for any item distribution
$p\in\Delta^{n-1}$, its marginal distribution at position $l$ is
\[
A_l p,
\]
and the parallel token interface can be written as the linear mapping
\[
\Phi_{\mathrm{PAR}}(p)
=
Ap,
\qquad
A
=
\begin{bmatrix}
A_1\\
\vdots\\
A_L
\end{bmatrix}.
\]

Because every item has exactly one token at each SID position, for every
$x\in\mathbb{R}^{n}$ and every $l$,
\[
\mathbf{1}_{m_l}^{\top}A_lx
=
\mathbf{1}_{n}^{\top}x.
\]
Consequently, the image of $A$ is contained in the linear subspace
\[
\mathcal{W}
=
\left\{
(y_1,\ldots,y_L):
\mathbf{1}_{m_1}^{\top}y_1
=
\cdots
=
\mathbf{1}_{m_L}^{\top}y_L
\right\}.
\]
The $L-1$ equalities defining $\mathcal{W}$ are linearly independent.
Therefore,
\begin{align}
\dim(\mathcal{W})
&=
\sum_{l=1}^{L}m_l-(L-1)\\
&=
1+\sum_{l=1}^{L}(m_l-1).
\end{align}
It follows that
\[
\operatorname{rank}(A)
\leq
1+\sum_{l=1}^{L}(m_l-1).
\]
Under the condition of the theorem,
\[
n
>
1+\sum_{l=1}^{L}(m_l-1),
\]
and hence
\[
\operatorname{rank}(A)<n.
\]
By the rank--nullity theorem, there exists a nonzero vector
$h\in\mathbb{R}^{n}$ such that
\[
Ah=0.
\]

Moreover, for any position $l$,
\[
\mathbf{1}_{n}^{\top}h
=
\mathbf{1}_{m_l}^{\top}A_lh
=
0.
\]
Since $h\neq 0$ and its coordinates sum to zero, $h$ must contain at least
one positive and one negative coordinate. Hence, there exist
$a,b\in\mathcal{I}$ such that
\[
h_a>0,
\qquad
h_b<0.
\]

Let $u\in\Delta^{n-1}$ be the uniform distribution,
\[
u_i=\frac{1}{n},
\qquad i\in\mathcal{I}.
\]
Choose
\[
0<\epsilon<
\frac{1}{n\lVert h\rVert_{\infty}}
\]
and define
\[
p^{+}=u+\epsilon h,
\qquad
p^{-}=u-\epsilon h.
\]
The choice of $\epsilon$ ensures that every coordinate of $p^{+}$ and
$p^{-}$ is strictly positive. Furthermore,
\[
\mathbf{1}_{n}^{\top}p^{+}
=
\mathbf{1}_{n}^{\top}p^{-}
=
1,
\]
because $\mathbf{1}_{n}^{\top}h=0$. Therefore,
\[
p^{+},p^{-}\in\Delta^{n-1}.
\]

Since $Ah=0$, the two distributions induce exactly the same position-wise
token marginals:
\begin{align}
\Phi_{\mathrm{PAR}}(p^{+})
&=
A(u+\epsilon h)
=
Au+\epsilon Ah
=
Au,\\
\Phi_{\mathrm{PAR}}(p^{-})
&=
A(u-\epsilon h)
=
Au-\epsilon Ah
=
Au.
\end{align}
Thus,
\[
\Phi_{\mathrm{PAR}}(p^{+})
=
\Phi_{\mathrm{PAR}}(p^{-}).
\]

On the other hand, the relative order of items $a$ and $b$ is reversed:
\[
p^{+}_a-p^{+}_b
=
\epsilon(h_a-h_b)
>
0,
\]
whereas
\[
p^{-}_a-p^{-}_b
=
-\epsilon(h_a-h_b)
<
0.
\]
Therefore, item $a$ is ranked above item $b$ under $p^{+}$ but below item
$b$ under $p^{-}$. Hence,
\[
\mathcal{R}(p^{+})
\neq
\mathcal{R}(p^{-}).
\]
Because $p^{+}$ and $p^{-}$ are strictly positive, they are realizable
under the assumed model family and are therefore valid model-induced
preference distributions.

Finally, let
\[
D:
\operatorname{Range}(\Phi_{\mathrm{PAR}})
\rightarrow
\mathfrak{S}_{\mathcal{I}}
\]
be any decoder based only on the position-wise marginals. Since the two
interface outputs are identical,
\[
D\bigl(\Phi_{\mathrm{PAR}}(p^{+})\bigr)
=
D\bigl(\Phi_{\mathrm{PAR}}(p^{-})\bigr).
\]
However, their exact model rankings are different. Thus, the common decoder
output cannot equal both $\mathcal{R}(p^{+})$ and
$\mathcal{R}(p^{-})$. No such decoder can therefore recover the exact
ranking for every model-induced preference distribution, and
$\Phi_{\mathrm{PAR}}$ is not globally ranking-sufficient.
\end{proof}

\section{C. Baselines}
\begin{table*}[t]
\centering
\caption{Controlled comparison across different item representations. The best results are shown in \textbf{bold}.}
\label{tab:controlled_comparison}
\small
\setlength{\tabcolsep}{8pt}
\begin{tabular}{lcccccc}
\toprule
& \multicolumn{2}{c}{AR}
& \multicolumn{2}{c}{Parallel}
& \multicolumn{2}{c}{\model} \\
\cmidrule(lr){2-3}
\cmidrule(lr){4-5}
\cmidrule(lr){6-7}
Representation
& HR@10 
& NDCG@10 
& HR@10 
& NDCG@10 
& HR@10 
& NDCG@10  \\
\midrule
RQ-VAE
& 0.0709 & 0.0423
& 0.0713 & 0.0496
& \textbf{0.0916} & \textbf{0.0602} \\

R-Kmeans
& 0.0705 & 0.0437
& 0.0682 & 0.0407
& \textbf{0.0809} & \textbf{0.0575} \\

OPQ
& 0.0610 & 0.0392
& 0.0721 & 0.0501
& \textbf{0.0830} & \textbf{0.0611} \\

Title
& 0.0033 & 0.0015
& N/A & N/A
& \textbf{0.0084} & \textbf{0.0043} \\

ID
& 0.0335 & 0.0174
& N/A & N/A
& \textbf{0.0827} & \textbf{0.0482} \\
\bottomrule
\end{tabular}
\end{table*}
We adopt the following representative sequential recommendation models as baselines for comparison with our proposed framework \model: 

\begin{itemize}
    \item \textbf{Caser}~\cite{tang2018personalized} leverages convolutional neural networks to extract local and global sequential patterns by horizontal and vertical convolutional filters.
    \item \textbf{HGN}~\cite{ma2019hierarchical} introduces a hierarchical gating mechanism that dynamically balances short-range and long-range user preferences in sequential recommendation.
    \item \textbf{GRU4Rec}~\cite{hidasi2015session} models user interaction sequences with recurrent neural networks, capturing temporal dynamics through gated recurrent units.
    \item \textbf{SASRec}~\cite{kang2018self} adopts a causal self-attention mechanism to model unidirectional sequential dependencies of user interaction sequence for next-item recommendation.
    \item \textbf{S\textsuperscript{3}-Rec}~\cite{zhou2020s3} pre-train a self-supervised sequential recommendation model via maximizing mutual information to learn the correlation between items and attributes.
    \item \textbf{P5-CID}~\cite{hua2023index,geng2022recommendation} reformulates multiple recommendation objectives into a unified text-to-text paradigm and jointly models these tasks using the T5 architecture~\cite{ni2022sentence}.
    \item \textbf{TIGER}~\cite{rajput2023recommender} is the first to apply generative retrieval paradigm on sequential recommendation task and to introduce semantic IDs for item representation.
    \item \textbf{LC-Rec}~\cite{zheng2024adapting} is a representative autoregressive generative recommender. It also designs auxiliary fine-tuning tasks to align language semantics with collaborative semantics.
    \item \textbf{CoFiRec}~\cite{wei2025cofirec} represents items with semantic IDs and collaborative IDs simultaneously. It also introduces additional learnable position embeddings to better capture structural dependencies.
    \item \textbf{RPG}~\cite{hou2025generating} utilizes long semantic IDs and generate the tokens under the parallel generation paradigm, achieving great recommendation performance and short inference latency.
\end{itemize}

\section{D. Extensive Experimental Study}

We provide experiments of different SID constructors, item representations and decoding strategies with the same LLM backbone. For SID constructors, we use RQ-VAE~\cite{rajput2023recommender}, R-KMeans~\cite{deng2025onerec} and OPQ~\cite{hou2025generating}. For item representations, we use SID, item title and vanilla ID. For decoding strategies, we use autoregressive, parallel and \model for comparison. For brevity, we use GPT-2 as the LLM backbone and dot product for parallel generation. The experiments are performed on the Instruments dataset. Table~\ref{tab:controlled_comparison} exhibits the results. Note that parallel decoding for ID is a single step generation exactly same as \model, while title has unfixed length and is not applicable to parallel decoding.

\model consistently delivers the strongest recommendation performance
across the evaluated item representations. It achieves the best HR@10
and NDCG@10 in every applicable comparison, outperforming both AR
and Parallel under all SID constructions. This result supports the
effectiveness of directly optimizing item-level rankings instead of
recovering them through token predictions.

\model is also more robust to the choice of SID constructor. The relative
performance of AR and Parallel changes across RQ-VAE, R-Kmeans, and
OPQ, whereas \model remains the best-performing paradigm in every case.
This consistency suggests that item-level scoring provides a more stable
interface for exploiting different SID structures.

Semantic SID representations generally provide more effective item
representations than titles or atomic IDs. RQ-VAE produces the highest
hit rate, while OPQ achieves the highest ranking quality; by contrast,
the title-based variant performs substantially worse, and the atomic-ID
variant remains weaker in NDCG@10. This pattern indicates that the
compositional information encoded by SIDs contributes meaningfully to
the ranking performance of \model.

\end{document}